\newtheorem{Thm}{Theorem}
\newtheorem{Prop}{Proposition}
\theoremstyle{definition}
\newcommand{\bra}[1]{{\left\langle #1 \right|}}
\newcommand{\ket}[1]{{\left| #1 \right\rangle}}
\newcommand{\T}{\mbox{$\mathrm{tr}$}}
\begin{document}
%%%%%%%%%%%%%%%%%%%%%%%%%%%%%%%%%%%%%%%%%%%%%%%%%%%%%%%%%%%%%%%%%%%%%%%%%%
%                                                                        %
%                                 Title                                  %
%                                                                        %
%%%%%%%%%%%%%%%%%%%%%%%%%%%%%%%%%%%%%%%%%%%%%%%%%%%%%%%%%%%%%%%%%%%%%%%%%%
\title{Entanglement of formation and monogamy of multi-party quantum entanglement}
\author{Jeong San Kim}
\email{freddie1@khu.ac.kr} \affiliation{
 Department of Applied Mathematics and Institute of Natural Sciences, Kyung Hee University, Yongin-si, Gyeonggi-do 446-701, Korea
}
\date{\today}

%%%%%%%%%%%%%%%%%%%%%%%%%%%%%%%%%%%%%%%%%%%%%%%%%%%%%%%%%%%%%%%%%%%%%%%%%%
%                                                                        %
%                              Abstract                                  %
%                                                                        %
%%%%%%%%%%%%%%%%%%%%%%%%%%%%%%%%%%%%%%%%%%%%%%%%%%%%%%%%%%%%%%%%%%%%%%%%%%
\begin{abstract}
We provide a sufficient condition for the monogamy inequality of multi-party quantum entanglement of arbitrary dimensions in terms of entanglement of formation. Based on the classical-classical-quantum(ccq) states whose quantum parts are obtained from the two-party reduced density matrices of a three-party quantum state, we show the additivity of the mutual information of the ccq states guarantees the monogamy inequality of the three-party pure state in terms of EoF. After illustrating the result with some examples, we generalize our result of three-party systems into any multi-party systems of arbitrary dimensions.
\end{abstract}

\pacs{
03.67.Mn,  % Entanglement production, characterization and manipulation
03.65.Ud % Entanglement and quantum non-locality
}
%\keywords{}
\maketitle

%%%%%%%%%%%%%%%%%%%%%%%%%%%%%%%%%%%%%%%%%%%%%%%%%%%%%%%%%%%%%%%%%%%%%%
%%%                                                                %%%
%%%                         Introduction                           %%%
%%%                                                                %%%
%%%%%%%%%%%%%%%%%%%%%%%%%%%%%%%%%%%%%%%%%%%%%%%%%%%%%%%%%%%%%%%%%%%%%%
\section{Introduction}
Quantum entanglement is a non-classical nature of quantum mechanics, which is a useful resource in many quantum information processing tasks such as quantum teleportation, dense coding and quantum cryptography~\cite{tele,qkd,rev}. Because of its important roles in the field of quantum information and computation theory, there has been a significant amount of research focused on quantification of entanglement in bipartite quantum systems. {\em Entanglement of formation}(EoF) is the most well-known bipartite entanglement measure with an operational meaning that asymptotically quantifies how many bell states are needed to prepare the given state using local quantum operations and classical communications~\cite{bdsw}. Although EoF is defined in any bipartite quantum systems of arbitrary dimension, its definition for mixed states is based on `convex-roof extension', which takes the minimum average over all pure-state decompositions of the given state. As such an optimization is hard to deal with, analytic evaluation of EoF is known only in two-qubit systems~\cite{wootters} and some restricted cases of higher-dimensional systems so far.

In multi-party quantum systems, entanglement shows a distinct behavior that does not have any classical counterpart; if a pair of parties in a multi-party quantum system is maximally entangled, then they cannot be entanglement, not even classically correlated, with the rest parties. This restriction of sharing entanglement in multi-party quantum systems is known as the {\em monogamy of entanglement}(MoE)~\cite{T04, KGS}. MoE plays an important role such as the security proof of quantum key distribution in quantum cryptography~\cite{qkd, qkd2} and the $N$-representability problem for fermions in condensed-matter physics~\cite{anti}.

Mathematically, MoE can be characterized using {\em monogamy inequality}; for a three-party quantum state $\rho_{ABC}$ and its two-party reduced density matrices $\rho_{AB}$ and $\rho_{AC}$,
\begin{align}
E\left(\rho_{A(BC)}\right)\geq E\left(\rho_{AB}\right)+E\left(\rho_{AC}\right)
\label{MoEg}
\end{align}
where $E\left(\rho_{XY}\right)$ is an entanglement measure quantifying the amount of entanglement between subsystems $X$ and $Y$ of the bipartite quantum state $\rho_{XY}$. Inequality~(\ref{MoEg}) shows the mutually exclusive nature of bipartite entanglement $E\left(\rho_{AB}\right)$ and $E\left(\rho_{AC}\right)$ shared in three-party quantum systems so that their summation cannot exceeds the total entanglement $E\left(\rho_{A(BC)}\right)$.

Using tangle~\cite{CKW} as the bipartite entanglement measure, Inequality~(\ref{MoEg}) was first shown to be true for all three-qubit states, and generalized for multi-qubit systems as well as some cases of higher-dimensional quantum systems~\cite{OV, KDS}. However, not all bipartite entanglement measures can characterize MoE in forms of Inequality (\ref{MoEg}), but only few measures are known so far satisfying such monogamy inequality~\cite{KSRenyi, KimT, KSU}. Although EoF is the most natural bipartite entanglement measure with the operational meaning in quantum state preparation, EoF is known to fail in characterizing MoE as the monogamy inequality in~(\ref{MoEg}) even in three-qubit systems; there exists quantum states in three-qubit systems violating Inequality~(\ref{MoEg}) if EoF is used as the bipartite entanglement measure. Thus, a natural question we can ask is `On what condition does the monogamy inequality hold in terms of the given bipartite entanglement measure?'

Here, we provide a sufficient condition that monogamy inequality of quantum entanglement holds in terms of EoF in multi-party, arbitrary dimensional quantum systems. For a three-party quantum state, we first consider the classical-classical-quantum(ccq) states whose quantum parts are obtained from the two-party reduced density matrices of the three-party state. By evaluating quantum mutual information of the ccq states as well as their reduced density matrices, we show that the additivity of the mutual information of the ccq states guarantees the monogamy inequality of the three-party quantum state in terms of EoF. We provide some examples of three-party pure state to illustrate our result, and we generalize our result of three-party systems into any multi-party systems of arbitrary dimensions.

This paper is organized as follows. First we briefly review the definitions of classical and quantum correlations in bipartite quantum systems and recall their trade-off relation in three-party quantum systems. After providing the definition of ccq states as well as  their mutual information between classical and quantum parts, we establish the monogamy inequality of three-party quantum entanglement in arbitrary dimensional quantum systems in terms of EoF conditioned on the additivity of the mutual information for the ccq states. We also illustrate our result of monogamy inequality in three-party quantum systems with some examples, and we generalize our result of entanglement monogamy inequality into multi-party quantum systems of arbitrary dimensions. Finally, we summarize our results.

\section{Correlations in bipartite quantum systems}
\label{Sec: correlations}
For a bipartite pure state $\ket{\psi}_{AB}$, its {\em entanglement of formation}(EoF) is defined by the entropy of a subsystem,
\begin{equation}
{E}_{\bf f}\left(\ket{\psi}_{AB} \right)=S(\rho_A),
\label{EoFpure}
\end{equation}
where $\rho_A=\T _{B} \ket{\psi}_{AB}\bra{\psi}$ is the reduced density matrix of $\ket{\psi}_{AB}$ on subsystem $A$,
and $S\left(\rho\right)=-\T \rho \ln \rho$ is the von Neumann entropy of the quantum state $\rho$.
For a bipartite mixed state $\rho_{AB}$, its EoF is defined by the minimum average entanglement
\begin{equation}
E_{\bf f}\left(\rho_{AB} \right)=\min \sum_i p_i E_{\bf f}(\ket{\psi_i}_{AB}),
\label{qEmixed}
\end{equation}
over all possible pure state decompositions of $\rho_{AB}=\sum_{i} p_i |\psi_i\rangle_{AB}\langle\psi_i|$.

For a probability ensemble $\mathcal E = \{p_i, \rho_i\}$ realizing a quantum state $\rho$ such that $\rho=\sum_{i}p_i\rho_i$, its {\em Holevo quantity} is defined as
\begin{align}
\chi\left(\mathcal E\right)=S\left(\rho\right)-\sum_{i}p_i S\left(\rho_i\right).
\label{eq: holevo}
\end{align}
Given a bipartite quantum state $\rho_{AB}$, each measurement $\{M^x_B\}$ applied on subsystem $B$ induces a probability ensemble $\mathcal E = \{p_x, \rho_A^x\}$ of the reduced density matrix $\rho_A=\T_A\rho_{AB}$ in the way that $p_x=\T[(I_A\otimes M_B^x)\rho_{AB}]$
is the probability of the outcome $x$ and $\rho^x_A=\T_B[(I_A\otimes {M_B^x})\rho_{AB}]/p_x$
is the state of system $A$ when the outcome was $x$.
The {\em one-way classical correlation}
(CC)~\cite{HV} of a bipartite state $\rho_{AB}$ is defined by the maximum Holevo quantity
\begin{align}
{\mathcal J}^{\leftarrow}(\rho_{AB})&= \max_{\mathcal E} \chi\left(\mathcal E\right)
\label{CC}
\end{align}
over all possible ensemble representations $\mathcal E$ of $\rho_A$ induced by measurements on subsystem $B$.

The following proposition shows a trade-off relation between classical correlation and quantum entanglement(measured by CC and EoF, respectively) distributed in three-party quantum systems.
\begin{Prop}~\cite{KW}
For a three-party pure state $\ket{\psi}_{ABC}$ with reduced density matrices $\rho_{AB}=\T_C\ket{\psi}_{ABC}\bra{\psi}$,
$\rho_{AC}=\T_B\ket{\psi}_{ABC}\bra{\psi}$ and $\rho_{A}=\T_{BC}\ket{\psi}_{ABC}\bra{\psi}$, we have
\begin{align}
S(\rho_A)={\mathcal J}^{\leftarrow}(\rho_{AB})+E_{\bf f}\left(\rho_{AC}\right).
\label{CCEq}
\end{align}
\label{Prop: KW}
\end{Prop}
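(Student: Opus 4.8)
The plan is to reduce the identity to an equality between two minimizations and then to build an exact correspondence between them by purification. Since every Holevo quantity $\chi(\mathcal E)$ for an ensemble of $\rho_A$ shares the term $S(\rho_A)$, the definition~(\ref{CC}) of one-way classical correlation can be rewritten as
\begin{equation}
{\mathcal J}^{\leftarrow}(\rho_{AB})=S(\rho_A)-\min_{\{M^x_B\}}\sum_x p_x S\left(\rho^x_A\right),
\end{equation}
where the minimum is over all measurements $\{M^x_B\}$ on $B$ and $p_x,\rho^x_A$ are as in the text above. Hence Proposition~\ref{Prop: KW} is equivalent to the single statement $\min_{\{M^x_B\}}\sum_x p_x S(\rho^x_A)=E_{\bf f}(\rho_{AC})$, which I would prove by establishing the two inequalities separately, using throughout that $\ket{\psi}_{ABC}$ is pure.

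For the inequality ``$\geq$'', I would first observe that it suffices to consider rank-one POVMs on $B$, since replacing each POVM element by its spectral rank-one refinement does not increase $\sum_x p_x S(\rho^x_A)$ by concavity of the von Neumann entropy. For a rank-one element $M^x_B=\ket{\phi_x}_B\bra{\phi_x}$, purity of $\ket{\psi}_{ABC}$ gives $(I_{AC}\otimes\bra{\phi_x}_B)\ket{\psi}_{ABC}=\sqrt{p_x}\,\ket{\psi_x}_{AC}$ for a normalized $\ket{\psi_x}_{AC}$, with $p_x=\T[(I_A\otimes M^x_B)\rho_{AB}]$; a direct computation then yields $\rho_{AC}=\sum_x p_x\ket{\psi_x}_{AC}\bra{\psi_x}$ and $\rho^x_A=\T_C\ket{\psi_x}_{AC}\bra{\psi_x}$, so that $S(\rho^x_A)=E_{\bf f}(\ket{\psi_x}_{AC})$ by~(\ref{EoFpure}). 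Thus $\sum_x p_x S(\rho^x_A)=\sum_x p_x E_{\bf f}(\ket{\psi_x}_{AC})$ is the average entanglement of a genuine pure-state decomposition of $\rho_{AC}$, which is bounded below by $E_{\bf f}(\rho_{AC})$ via~(\ref{qEmixed}).

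For the inequality ``$\leq$'', I would start from the Schmidt decomposition of $\ket{\psi}_{ABC}$ across the bipartition $AC|B$, say $\ket{\psi}_{ABC}=\sum_k\sqrt{\mu_k}\ket{g_k}_{AC}\ket{k}_B$, so that $\rho_{AC}=\sum_k\mu_k\ket{g_k}_{AC}\bra{g_k}$ and $\mathrm{rank}\,\rho_B=\mathrm{rank}\,\rho_{AC}\le\dim\mathcal H_B$. Given any pure-state decomposition $\rho_{AC}=\sum_x q_x\ket{\eta_x}_{AC}\bra{\eta_x}$, the Hughston--Jozsa--Wootters theorem provides a matrix $(U_{xk})$ with $U^\dagger U=I$ and $\sqrt{q_x}\ket{\eta_x}_{AC}=\sum_k U_{xk}\sqrt{\mu_k}\ket{g_k}_{AC}$. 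Setting $\ket{\phi_x}_B=\sum_k\overline{U_{xk}}\ket{k}_B$ produces operators $M^x_B=\ket{\phi_x}_B\bra{\phi_x}$ whose sum is the projection onto the support of $\rho_B$, hence a valid measurement on $B$ (the complementary outcome has zero probability). By the computation used for ``$\geq$'', this measurement induces exactly the ensemble $\{q_x,\T_C\ket{\eta_x}_{AC}\bra{\eta_x}\}$ of $\rho_A$, so $\min_{\{M^x_B\}}\sum_x p_x S(\rho^x_A)\le\sum_x q_x S(\T_C\ket{\eta_x}_{AC}\bra{\eta_x})$; minimizing the right-hand side over all decompositions gives the bound $\le E_{\bf f}(\rho_{AC})$.

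Combining the two inequalities yields Proposition~\ref{Prop: KW}. The step I expect to be the main obstacle is the ``$\leq$'' direction, namely arguing that \emph{every} pure-state decomposition of $\rho_{AC}$, including arbitrarily fine ones, is realized by some measurement on $B$; the resolution is that a POVM on $B$ may carry any number of rank-one elements, each of which only needs to lie in the $(\mathrm{rank}\,\rho_B)$-dimensional range of $\rho_B$, and $\mathrm{rank}\,\rho_B=\mathrm{rank}\,\rho_{AC}$. The remaining points — reducing both optimizations to finitely many outcomes by Carath\'eodory so that the minima are attained, and continuity of $S$ — are routine.
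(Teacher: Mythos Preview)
Your argument is correct and is essentially the original Koashi--Winter proof. Note, however, that the paper does \emph{not} supply its own proof of Proposition~\ref{Prop: KW}: the statement is quoted from~\cite{KW} and used as a black box in the proof of Theorem~\ref{thm: 3mono}. So there is nothing in the paper to compare your proposal against beyond the citation itself; what you have written is precisely a self-contained derivation of the cited result, via the standard bijection (through rank-one POVMs and the HJW theorem) between measurements on $B$ and pure-state decompositions of $\rho_{AC}$ when $\ket{\psi}_{ABC}$ is pure.
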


\section{Classical-Classical-Quantum(CCQ) States}
\label{subsec: ccq}
In this section, we consider a four-party ccq states obtained from a bipartite state $\rho_{AB}$,
and provide detail evaluations of their mutual information. Without loss of generality, we assume that any bipartite state as a two-qu$d$it state by taking $d$ as the dimension of larger dimensional subsystem.

For a two-qudit state $\rho_{AB}$,  let us consider a spectral decomposition
\begin{align}
\rho_B=\sum_{i=0}^{d-1}\lambda_{i}\ket{e_i}_B\bra{e_i}
\label{specrhoB}
\end{align}
of the reduced density matrix $\rho_B=\T_{A}\rho_{AB}$. Let
\begin{align}
\mathcal E_0=\{\lambda_i,\sigma_A^i\}_i
\label{ensemble0}
\end{align}
be the probability ensemble of $\rho_A=\T_{B}\rho_{AB}$ from the measurement
$\{\ket{e_i}_B\bra{e_i}\}_{i=1}^{d-1}$ on subsystem $B$ of $\rho_{AB}$, in a way that
\begin{align}
\lambda_i=\T \left[(I_A \otimes\ket{e_i}_B\bra{e_i})\rho_{AB}\right]
\label{lami}
\end{align}
and
\begin{align}
\sigma_A^i=\frac{1}{\lambda_i}\T_B \left[(I_A \otimes\ket{e_i}_B\bra{e_i})\rho_{AB}\right].
\label{sigi}
\end{align}

Based on the eigenvectors $\{ \ket{e_j }_{B}\}$ of $\rho_B$, we also consider the $d$-dimensional {\em Fourier basis} elements
\begin{equation}
|\tilde e_j \rangle_{B} = \frac{1}{\sqrt{d}}\sum_{k=0}^{d-1}
\omega_d^{jk}\ket{e_k}_{B}
\label{fourier}
\end{equation}
for each $j=0,\ldots ,d-1$, where $\omega_d = e^{\frac{2\pi i}{d}}$ is the $d$th-root of unity.
Let
\begin{align}
\mathcal E_1=\{\frac{1}{d},\tau_A^j\}_j,
\label{ensemble1}
\end{align}
be the probability ensemble of $\rho_A=\T_{B}\rho_{AB}$ obtained by measuring subsystem $B$
in terms of the Fourier basis $\{\ket{\tilde{e}_j}_B\bra{\tilde{e}_j}\}_{j=1}^{d-1}$ where
\begin{align}
\frac{1}{d}=\T \left[(I_A \otimes\ket{\tilde{e}_j}_B\bra{\tilde{e}_j})\rho_{AB}\right]
\label{1/d}
\end{align}
and
\begin{align}
\tau_A^j=d\T_B\left[(I_A\otimes |\tilde e_j\rangle_B \langle\tilde e_j|)\rho_{AB}\right].
\label{tauj}
\end{align}

Now we define the generalized $d$-dimensional Pauli operators based on the eigenvectors of $\rho_B$ as
\begin{align}
Z=\sum_{j=0}^{d-1}\omega_d^j\ket{e_j}_{B}\bra{e_j},~X=\sum_{j=0}^{d-1}\ket{e_{j+1}}_{B}\bra{e_j}=\sum_{j=0}^{d-1} \omega_d^{-j}|\tilde
e_j \rangle_{B} \langle \tilde e_j |,
\label{paulis}
\end{align}
and consider a four-qudit ccq state $\Gamma_{XYAB}$
\begin{align}
\Gamma_{XYAB}=\frac {1}{d^2}\sum_{x,y=0}^{d-1}\ket{x}_X
\bra{x}\otimes\ket{y}_Y\bra{y}
\otimes \left[(I_A\otimes X^x_BZ^y_B)\rho_{AB}(I_A\otimes Z^{-y}_BX^{-x}_B)\right],
\label{XYAB}
\end{align}
for some $d$-dimensional orthonormal bases $\{\ket{x}_X\}$ and $\{\ket{y}_Y\}$ of the subsystems $X$ and $Y$, respectively.
From Eqs.~(\ref{lami}),~(\ref{sigi}),~(\ref{1/d}), (\ref{tauj}) and (\ref{paulis}), the reduced density matrices of $\Gamma_{XYAB}$ are obtained as
\begin{align}
\Gamma_{XAB}=\frac {1}{d}\sum_{x=0}^{d-1}&\ket{x}_X\bra{x}\otimes \left[ \left(I_A \otimes X^x_B\right)
\left(\sum_{i=0}^{d-1} \sigma_A^i \otimes \lambda_i\ket{e_i}_B\bra{e_i}\right)\left(I_A \otimes X_B^{-x}\right)\right],
\label{XAB}
\end{align}
\begin{align}
\Gamma_{YAB}=&\frac {1}{d}\sum_{y=0}^{d-1}\ket{y}_Y\bra{y}\otimes \left[ \left(I_A \otimes Z_B^y \right) \left(\sum_{j=0}^{d-1} \tau_A^j \otimes
\frac{1}{d}|\tilde e_j \rangle_B \langle \tilde e_j|\right)\left(I_A \otimes Z_B^{-y}\right)\right],
 \label{YAB}
\end{align}
and
\begin{equation}
\Gamma_{AB}=\rho_A\otimes\frac{I_B}{d},
\label{AB}
\end{equation}
where $I_A$ and $I_B$ are $d$-dimensional identity operators of subsystems $A$ and $B$, respectively.

Before we move to the next section, we evaluate the mutual information of the ccq state in Eq.~(\ref{XYAB}) as well as the reduced density matrices in
Eqs.~(\ref{XAB}) and (\ref{YAB}); the classical parts of the four-qudit ccq state $\Gamma_{XYAB}$ in Eq.~(\ref{XYAB}) is
\begin{align}
\Gamma_{XY}=\frac 1{d^2}\sum_{x,y=0}^{d-1}\ket{x}_X \bra{x}\otimes\ket{y}_Y\bra{y},
\label{GammaXY}
\end{align}
which is the maximally mixed state in $d^2$-dimensional quantum system, therefore its von Neumann entropy is
\begin{align}
S\left(\Gamma_{XY}\right)=-\sum_{x,y=0}^{d-1}\frac{1}{d^2}\ln\left(\frac{1}{d^2}\right)=2\ln d.
\label{vonOXY}
\end{align}
We also note that Eq.~(\ref{AB}) leads us to
\begin{align}
S\left(\Gamma_{AB} \right)=S(\rho_A)+\ln d.
\label{vonAB}
\end{align}
From the {\em joint entropy theorem}~\cite{joint1, joint},
we have
\begin{align}
S(\Gamma_{XYAB})=&2\ln d+\frac {1}{d^2}\sum_{x,y=0}^{d-1}S\left((I_A\otimes X^x_BZ^y_B)\rho_{AB}(I_A\otimes Z^{-y}_BX^{-x}_B)\right)
=2\ln d+S\left(\rho_{AB}\right),
\label{Ixyab0}
\end{align}
where the second equality is due to the unitary invariance of von Neumann entropy.
Thus Eqs.~(\ref{vonOXY}),~(\ref{vonAB}) and (\ref{Ixyab0}) give us
the mutual information of the four-qudit ccq state $\Gamma_{XYAB}$ with respect to the bipartition between $XY$ and $AB$ as
\begin{align}
{I}\left(\Gamma_{XY:AB}\right)=\ln d+S(\rho_A)-S(\rho_{AB}).
\label{Ixyab}
\end{align}

For the von Neumann entropy of $\Gamma_{XAB}$ in Eq.~(\ref{XAB}), we have
\begin{align}
S\left(\Gamma_{XAB}\right)=&\ln d +\frac {1}{d}\sum_{x=0}^{d-1} S\left(\left(I_A \otimes X^x_B\right)
\left(\sum_{i=0}^{d-1} \sigma_A^i \otimes \lambda_i\ket{e_i}_B\bra{e_i}\right)\left(I_A \otimes X_B^{-x}\right)\right)\nonumber\\
=&\ln d+S\left(\sum_{i=0}^{d-1} \sigma_A^i \otimes \lambda_i\ket{e_i}_B\bra{e_i}\right)\nonumber\\
=&\ln d+H(\Lambda)+\sum_{i=1}^{d-1}\lambda_i S\left(\sigma_A^i \right),
\label{vonXAB0}
\end{align}
where the first equality is from the joint entropy theorem, the second equality is due to the unitary invariance of von Neumann entropy
and the last equality is the joint entropy theorem together with
\begin{equation}
H(\Lambda)=-\sum_{i}\lambda_i \ln \lambda_i,
\label{shannonlamb}
\end{equation}
which is the shannon entropy of the spectrum $\Lambda=\{\lambda_i\}$ of $\rho_B$ in Eq.~(\ref{specrhoB}).
Thus we can rewrite the von Neumann entropy of $\Gamma_{XAB}$ as
\begin{align}
S\left(\Gamma_{XAB}\right)=\ln d+S\left(\rho_B\right)+\sum_{i=1}^{d-1}\lambda_i S\left(\sigma_A^i \right).
\label{vonXAB}
\end{align}
Because the classical parts of $\Gamma_{XAB}$ is the $d$-dimensional maximally mixed state $\Gamma_X=\frac{1}{d}\sum_{x=0}^{d-1}\ket{x}_X \bra{x}$,
we have the mutual information of $\Gamma_{XAB}$ with respect to the bipartition between $X$ and $AB$ as
\begin{align}
{I}(\Gamma_{X:AB})=&S(\Gamma_X)+S(\Gamma_{AB})-S(\Gamma_{XAB})
=\ln d-S(\rho_B)+\chi(\mathcal E_0).
\label{Ixab}
\end{align}

For the von Neumann entropy of $\Gamma_{YAB}$ in Eq.~(\ref{YAB}), we have
\begin{align}
S\left(\Gamma_{XAB}\right)=&\ln d +\frac {1}{d}\sum_{y=0}^{d-1} S\left(\left(I_A \otimes Z_B^y \right) \left(\sum_{j=0}^{d-1} \tau_A^j \otimes \frac{1}{d}|\tilde e_j \rangle_B \langle \tilde e_j|\right)\left(I_A \otimes Z_B^{-y}\right)\right)\nonumber\\
=&\ln d+S\left(\sum_{j=0}^{d-1} \tau_A^j \otimes \frac{1}{d}|\tilde e_j \rangle_B \langle \tilde e_j|\right)\nonumber\\
=&2\ln d+\frac{1}{d}\sum_{j=1}^{d-1} S\left(\tau_A^j \right),
\label{vonYAB0}
\end{align}
where the first and third equalities are due to the joint entropy theorem and the second equality is from the unitary invariance of von Neumann entropy. Thus the mutual information of $\Gamma_{YAB}$ with respect to the bipartition between $Y$ and $AB$ is
\begin{align}
{I}(\Gamma_{Y:AB})=&S(\Gamma_Y)+S(\Gamma_{AB})-S(\Gamma_{YAB})\nonumber\\
=&\ln d + S(\rho_A)+\ln d-2\ln d-\frac{1}{d}\sum_{j=1}^{d-1} S\left(\tau_A^j \right)\nonumber\\
=&\chi(\mathcal E_1).
\label{Iyab}
\end{align}

\section{Monogamy inequality of multi-party entanglement in terms of EoF}
\label{Sec: mono EoF}
It is known that quantum mutual information is superadditive for any ccq state~\cite{Kim16T}
\begin{equation}
  \Xi_{XYAB}=\frac 1{d^2}\sum_{x,y=0}^{d-1}\ket{x}_X
  \bra{x}\otimes\ket{y}_Y\bra{y}\otimes\sigma^{xy}_{AB},
\label{gXYAB}
\end{equation}
that is,
\begin{equation}
{I}\left(\Xi_{XY:AB}\right)\geq {I}\left(\Xi_{X:AB}\right)+{I}\left(\Xi_{Y:AB}\right).
\label{mysuper}
\end{equation}
Here we show that the additivity of quantum mutual information for ccq states guarantees the monogamy inequality of three-party quantum entanglement in therms of EoF.
\begin{Thm}
For any three-party pure state $\ket{\psi}_{ABC}$
with its two-qudit reduced density matrices $\T_C \ket{\psi}_{ABC}\bra{\psi}=\rho_{AB}$ and $\T_B \ket{\psi}_{ABC}\bra{\psi}=\rho_{AC}$,
we have
\begin{align}
E_{\bf f}\left(\ket{\psi}_{A(BC)}\right)
\geq& E_{\bf f}\left(\rho_{AB}\right)+E_{\bf f}\left(\rho_{AC}\right),
\label{3mono}
\end{align}
conditioned on the additivity of quantum mutual information for the ccq states in Eq.~(\ref{XYAB})
obtained by $\rho_{AB}$ and $\rho_{AC}$.
\label{thm: 3mono}
\end{Thm}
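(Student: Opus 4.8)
\section*{Proof proposal}

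The plan is to prove the equivalent statement $S(\rho_A)\ge E_{\bf f}(\rho_{AB})+E_{\bf f}(\rho_{AC})$, since $E_{\bf f}\left(\ket{\psi}_{A(BC)}\right)=S(\rho_A)$ by Eq.~(\ref{EoFpure}). I would reach it by bounding $E_{\bf f}(\rho_{AB})$ and $E_{\bf f}(\rho_{AC})$ separately in terms of the quantities already computed for the ccq states in Section~\ref{subsec: ccq}, and then adding the two bounds. Because the argument is symmetric under exchanging $B$ and $C$, the additivity hypothesis is used for \emph{both} ccq states, the one built from $\rho_{AB}$ and the one built from $\rho_{AC}$.

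The key observation is that $\ket{\psi}_{ABC}$ is a purification of $\rho_{AB}$ with purifying system $C$, and of $\rho_{AC}$ with purifying system $B$. Measuring subsystem $B$ in the eigenbasis $\{\ket{e_i}_B\}$ of $\rho_B$ leaves, conditioned on outcome $i$ (which occurs with probability $\lambda_i$), a \emph{pure} state $\ket{\phi_i}_{AC}$ on $AC$; since $\sum_i\lambda_i\ket{\phi_i}_{AC}\bra{\phi_i}=\rho_{AC}$, this is a pure-state decomposition of $\rho_{AC}$, and a direct check shows its $A$-marginals are exactly the states $\sigma_A^i$ of Eq.~(\ref{sigi}). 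Hence, by the convex-roof definition~(\ref{qEmixed}), $E_{\bf f}(\rho_{AC})\le\sum_i\lambda_i S(\sigma_A^i)$. Repeating this with the Fourier-basis measurement $\{\ket{\tilde e_j}_B\bra{\tilde e_j}\}$, whose outcomes are uniform by Eq.~(\ref{1/d}), produces a second pure-state decomposition of $\rho_{AC}$ with $A$-marginals $\tau_A^j$, so that $E_{\bf f}(\rho_{AC})\le\frac1d\sum_j S(\tau_A^j)$ as well.

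Next I would feed these into the additivity hypothesis. Combining Eqs.~(\ref{Ixyab}),~(\ref{Ixab}) and~(\ref{Iyab}), the equality $I(\Gamma_{XY:AB})=I(\Gamma_{X:AB})+I(\Gamma_{Y:AB})$ collapses, using the definition~(\ref{eq: holevo}) of the Holevo quantity and the purity identity $S(\rho_{AB})=S(\rho_C)$, to
\begin{align}
\sum_i\lambda_i S(\sigma_A^i)+\frac1d\sum_j S(\tau_A^j)=S(\rho_A)-S(\rho_B)+S(\rho_C).
\label{plan:key}
\end{align}
Adding the two upper bounds for $E_{\bf f}(\rho_{AC})$ and invoking~(\ref{plan:key}) gives $2E_{\bf f}(\rho_{AC})\le S(\rho_A)-S(\rho_B)+S(\rho_C)$. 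Applying the very same construction to the ccq state built from $\rho_{AC}$, with measurements now on $C$ and with $\rho_C,\rho_{AC}$ playing the roles of $\rho_B,\rho_{AB}$, and using the purity identity $S(\rho_{AC})=S(\rho_B)$, yields $2E_{\bf f}(\rho_{AB})\le S(\rho_A)-S(\rho_C)+S(\rho_B)$. Summing these two inequalities, the terms $S(\rho_B)$ and $S(\rho_C)$ cancel and one is left with $E_{\bf f}(\rho_{AB})+E_{\bf f}(\rho_{AC})\le S(\rho_A)$, which is~(\ref{3mono}).

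The main obstacle, and the reason the statement must be conditional, is that the \emph{super}additivity inequality~(\ref{mysuper}) points the wrong way for this purpose: it would only give $\sum_i\lambda_i S(\sigma_A^i)+\frac1d\sum_j S(\tau_A^j)\ge S(\rho_A)-S(\rho_B)+S(\rho_C)$, which cannot upper-bound $E_{\bf f}$. It is exactly the assumed equality that turns the right-hand side of~(\ref{plan:key}) into a usable bound. The remaining points are routine bookkeeping: checking that the $A$-marginals of the post-measurement pure states really are the $\sigma_A^i$ and $\tau_A^j$ of Section~\ref{subsec: ccq} (completing to an arbitrary orthonormal eigenbasis of $\rho_B$ if it is degenerate or rank-deficient), and tracking which reduced state plays the role of ``$\rho_B$'' when the construction is run on $\rho_{AC}$.
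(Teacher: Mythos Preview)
Your argument is correct and reaches the same inequality, but it is packaged differently from the paper's proof. The paper routes everything through the one-way classical correlation $\mathcal{J}^{\leftarrow}$ and the Koashi--Winter identity (Proposition~\ref{Prop: KW}): from $\mathcal{J}^{\leftarrow}(\rho_{AB})\ge\chi(\mathcal{E}_0),\,\chi(\mathcal{E}_1)$ it gets $\mathcal{J}^{\leftarrow}(\rho_{AB})\ge\tfrac12 I(\rho_{AB})$ under additivity, then plugs this (and the analogous bound for $\rho_{AC}$) into $E_{\bf f}(\rho_{AB})+E_{\bf f}(\rho_{AC})=2S(\rho_A)-\mathcal{J}^{\leftarrow}(\rho_{AB})-\mathcal{J}^{\leftarrow}(\rho_{AC})$. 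You instead bypass $\mathcal{J}^{\leftarrow}$ entirely: by noting that each measurement on $B$ induces a pure-state decomposition of $\rho_{AC}$, you bound $E_{\bf f}(\rho_{AC})$ directly by $\sum_i\lambda_iS(\sigma_A^i)=S(\rho_A)-\chi(\mathcal{E}_0)$ and by $S(\rho_A)-\chi(\mathcal{E}_1)$, then average. The two arguments are equivalent at the level of inequalities---your step is precisely the ``easy'' direction underlying Koashi--Winter---but your version is slightly more elementary in that it uses only the convex-roof definition~(\ref{qEmixed}) and never needs the full equality in Eq.~(\ref{CCEq}). The paper's version, on the other hand, makes the role of classical correlation explicit and connects the result to the established trade-off in Proposition~\ref{Prop: KW}.
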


\begin{proof}
Let us first consider the four-qudit ccq state $\Gamma_{XYAB}$ of the form in Eq.~(\ref{XYAB}) obtained by the two-qudit
reduced density matrix $\rho_{AB}$ of $\ket{\psi}_{ABC}$. From Eqs.~(\ref{Ixyab}),~(\ref{Ixab}) and (\ref{Iyab}),
the additivity condition of quantum mutual information for $\Gamma_{XYAB}$
\begin{align}
{I}\left(\Gamma_{XY:AB}\right)= {I}\left(\Gamma_{X:AB}\right)+{I}\left(\Gamma_{Y:AB}\right)
\label{maddab}
\end{align}
can be rewritten as
\begin{align}
\chi(\mathcal E_0)+\chi(\mathcal E_1)=S\left(\rho_A\right)+S\left(\rho_B\right)-S\left(\rho_{AB}\right)
={I}\left(\rho_{AB}\right),
\label{maddab2}
\end{align}
where $\mathcal E_0$ and $\mathcal E_1$ are the probability ensembles of
$\rho_{A}$ in Eqs.~(\ref{ensemble0}) and (\ref{ensemble1}), respectively.

Because $\mathcal E_0$ and $\mathcal E_1$ can be obtained from measuring subsystem $B$ of $\rho_{AB}$ by the rank-1 measurement $\{\ket{e_i}_B\bra{e_i}\}_{i=1}^{d-1}$ and $\{\ket{\tilde e_j}_B\bra{\tilde e_j}\}_{j=1}^{d-1}$, respectively, the definition of CC in Eq.~(\ref{CC}) leads us to
\begin{align}
{\mathcal J}^{\leftarrow}(\rho_{AB})\geq \chi(\mathcal E_0),~{\mathcal J}^{\leftarrow}(\rho_{AB})\geq \chi(\mathcal E_1),
\label{CClow0}
\end{align}
therefore
\begin{align}
{\mathcal J}^{\leftarrow}(\rho_{AB})\geq \frac{1}{2}\left(\chi(\mathcal E_0)+\chi(\mathcal E_1)\right)
=\frac{1}{2} {I}\left(\rho_{AB}\right).
\label{CClowall1}
\end{align}
By considering the ccq state $\Gamma_{XYAC}$ from the two-qudit reduced density matrix
$\rho_{AC}$ of $\ket{\psi}_{ABC}$, we can analogously have
\begin{align}
{\mathcal J}^{\leftarrow}(\rho_{AC})\geq&\frac{1}{2} {I}\left(\rho_{AC}\right).
\label{CClowall2}
\end{align}

As the trade-off relation of Eq.~(\ref{CCEq}) in Proposition~\ref{Prop: KW} is universal with respect to
the subsystems, we also have
\begin{align}
S(\rho_A)={\mathcal J}^{\leftarrow}(\rho_{AC})+E_{\bf f}\left(\rho_{AB}\right),
\label{CCEq2}
\end{align}
for the given two-qudit state $\ket{\psi}_{ABC}$, therefore
\begin{align}
E_{\bf f}\left(\rho_{AB}\right)+E_{\bf f}\left(\rho_{AC}\right)=&
2S(\rho_A)-\left({\mathcal J}^{\leftarrow}(\rho_{AB})+{\mathcal J}^{\leftarrow}(\rho_{AC})\right).
\label{uni1}
\end{align}
Now Inequalities~(\ref{CClowall1}), (\ref{CClowall2}) as well as Eq.~(\ref{uni1}) lead us to
\begin{align}
E_{\bf f}\left(\rho_{AB}\right)+E_{\bf f}\left(\rho_{AC}\right)\leq&
2S(\rho_A)-\frac{1}{2}\left({I}(\rho_{AB})+{I}(\rho_{AC})\right)\nonumber\\
=&2S(\rho_A)-\frac{1}{2}\left(S(\rho_A)+S(\rho_B)-S(\rho_{AB})+S(\rho_A)+S(\rho_C)-S(\rho_{AC})\right)\nonumber\\
=&S(\rho_A)\nonumber\\
=&E_{\bf f}\left(\ket{\psi}_{A(BC)}\right),
\label{uni2}
\end{align}
where the second equality is due to $\rho_{AC}=\rho_B$ and $\rho_{AB}=\rho_C$ for three-party pure state $\ket{\psi}_{ABC}$.
\end{proof}

To illustrate Theorem~\ref{thm: 3mono}, let us first consider three-qubit GHZ state~\cite{GHZ},
\begin{align}
\ket{GHZ}_{ABC}=\frac{1}{\sqrt 2}\left(\ket{000}_{ABC}+\ket{111}_{ABC}\right),
\label{3GHZ}
\end{align}
with its reduced density matrices
\begin{align}
\rho_{AB}=\frac{1}{2}\left(\ket{00}_{AB}\bra{00}+\ket{11}_{AB}\bra{11}\right),~\rho_{A}=\frac{1}{2}\left(\ket{0}_{A}\bra{0}+\ket{1}_{A}\bra{1}\right)
\label{GHZredaba}
\end{align}
and
\begin{align}
\rho_{B}=\frac{1}{2}\left(\ket{0}_{B}\bra{0}+\ket{1}_{B}\bra{1}\right).
\label{GHZredb}
\end{align}

The eigenvalues of $\rho_B$ are $\lambda_0=\lambda_1=\frac{1}{2}$ with corresponding eigenvectors $\ket{e_0}_B=\ket{0}_B$ and $\ket{e_1}_B=\ket{1}_B$ respectively. Thus the ensemble of $\rho_A$ induced by measuring subsystem $B$ of $\rho_{AB}$ in terms of the eigenvectors of $\rho_B$, that is, $\{ M_B^0=\ket{0}_B\bra{0}, M_B^1=\ket{1}_B\bra{1}\}$ is
\begin{align}
\mathcal E_0= \{ \lambda_0=\frac{1}{2}, \sigma_A^0=\ket{0}_A \bra{0}, \lambda_1=\frac{1}{2}, \sigma_A^1=\ket{1}_A\bra{1}\}.
\label{GHZE0}
\end{align}
Because the Fourier basis elements of subsystem $B$ with respect to the eigenvectors of $\rho_B$ are
\begin{align}
|\tilde e_0 \rangle_B=\frac{1}{2}\left(\ket{0}_B+\ket{1}_B\right),
~|\tilde e_1 \rangle_B=
\frac{1}{2}\left(\ket{0}_B-\ket{1}_B\right), \label{Four2}
\end{align}
the ensemble of $\rho_A$ induced by measuring subsystem $B$ of $\rho_{AB}$ in terms of the Fourier basis in Eq.~(\ref{Four2}) is
\begin{align}
\mathcal E_1= \{ \frac{1}{2}, \tau_A^0=\frac{1}{2}I_A, \frac{1}{2}, \tau_A^1=\frac{1}{2}I_A\}.
\label{GHZE1}
\end{align}

Now we consider the additivity of mutual information of the ccq state $\Gamma_{XYAB}$ obtained from $\rho_{AB}$ in Eq.~(\ref{GHZredaba}). Due to Eq.~(\ref{Ixyab}), the mutual information of $\Gamma_{XYAB}$ between $XY$ and $AB$ is
\begin{align}
I\left(\Gamma_{XY:AB}\right)=&\ln 2+S(\rho_A)-S(\rho_{AB})=\ln 2
\label{GHZmutxyab}
\end{align}
because $S(\rho_A)=S(\rho_{AB})=\ln 2$ from Eqs.~(\ref{GHZredaba}).
For the mutual information of $\Gamma_{XAB}$ between $X$ and $AB$, Eq.~(\ref{Ixab}) leads us to
\begin{align}
I\left(\Gamma_{X:AB}\right)=&\ln 2-S(\rho_B)+\chi(\mathcal E_0)=\ln 2
\label{GHZmutxab}
\end{align}
where the second equality is from $S(\rho_B)= \ln 2$ and
\begin{align}
\chi(\mathcal E_0)=S(\rho_A)-\frac{1}{2}S(\ket{0}_A \bra{0})-\frac{1}{2}S(\ket{1}_A \bra{1})=\ln 2,
\label{ghzholev0}
\end{align}
for the ensemble $\mathcal E_0$ in Eq.~(\ref{GHZE0}).
For the mutual information of $\Gamma_{YAB}$ between $Y$ and $AB$, Eq.~(\ref{Iyab}) leads us to
\begin{align}
I\left(\Gamma_{Y:AB}\right)=&\chi(\mathcal E_1)%\nonumber\\
=S(\rho_A)-\frac{1}{2}S\left(\frac{1}{2}I_A \right)-\frac{1}{2}S\left(\frac{1}{2}I_A \right)%\nonumber\\
=0
\label{GHZmutyab}
\end{align}
where the second equality is due to the ensemble $\mathcal E_1$ in Eq.~(\ref{GHZE1}).

From Eqs.~(\ref{GHZmutxyab}), (\ref{GHZmutxab}) and (\ref{GHZmutyab}), we note that the mutual information of the  ccq state $\Gamma_{XYAB}$ obtained from $\rho_{AB}$ is additive as in Eq.~(\ref{maddab}). Moreover, the symmetry of GHZ state assures that the same is also true for the reduced density matrix $ \rho_{AC}=\T_B \ket{GHZ}_{ABC}\bra{GHZ}$. Thus Theorem~\ref{thm: 3mono} guarantees the monogamy inequality of the three-qubit GHZ state in Eq.~(\ref{3GHZ}) in terms of EoF. In fact, we have
\begin{align}
E_{\bf f}\left(\ket{GHZ}_{A(BC)}\right)=S(\rho_A)=\ln 2
\label{GHZEfABC}
\end{align}
whereas the two-qubit reduced density matrices $\rho_{AB}$ and $\rho_{AC}$ are separeble. Thus
\begin{align}
E_{\bf f}\left(\rho_{AB}\right)=E_{\bf f}\left(\rho_{AC}\right)=0,
\end{align}
and this implies the monogamy inequality in (\ref{3mono}).

Let us consider another example of three-qubit state; three-qubit W-state is defined as~\cite{W}
\begin{align}
\ket{W}_{ABC}=\frac{1}{\sqrt 3}\left(\ket{100}_{ABC}+\ket{010}_{ABC}+\ket{001}_{ABC}\right).
\label{3W}
\end{align}
The two-qubit reduced density matrix of $\ket{W}_{ABC}$ on subsystem $AB$ is obtained as
\begin{align}
\rho_{AB}=\frac{2}{3}\ket{\psi^+}_{AB}\bra{\psi^+}+\frac{1}{3}\ket{00}_{AB}\bra{00}
\label{Wredab}
\end{align}
where
\begin{align}
\ket{\psi^+}_{AB}=\frac{1}{\sqrt 2}\left(\ket{01}_{AB}+\ket{10}_{AB}\right)
\label{psi+}
\end{align}
is the two-qubit Bell state, and the one-qubit reduced density matrices of $\rho_{AB}$ are
\begin{align}
\rho_{A}=\frac{2}{3}\ket{0}_{A}\bra{0}+\frac{1}{3}\ket{1}_{A}\bra{1},~
\rho_{B}=\frac{2}{3}\ket{0}_{B}\bra{0}+\frac{1}{3}\ket{1}_{B}\bra{1}.
\label{Wredb}
\end{align}

From to the spectral decomposition of $\rho_B$ in Eq.~(\ref{Wredb}) with the eigenvalues $\lambda_0=\frac{2}{3}, \lambda_1=\frac{1}{3}$ and corresponding eigenvectors $\ket{e_0}_B=\ket{0}_B$ and $\ket{e_1}_B=\ket{1}_B$, respectively, it is straightforward to check that the ensemble of $\rho_A$ induced from measuring subsystem $B$ of $\rho_{AB}$ by the eigenvectors of $\rho_B$ is
\begin{align}
\mathcal E_0= \{ \lambda_0=\frac{2}{3}, \sigma_A^0=\frac{1}{2}I_A, \lambda_1=\frac{1}{3}, \sigma_A^1=\ket{0}_A\bra{0}\}.
\label{WE0}
\end{align}
Because the Fourier basis of subsystem $A$ is the same as Eq.~(\ref{Four2}), it is also straightforward to obtain the ensemble of $\rho_A$ induced by measuring subsystem $B$ of $\rho_{AB}$ in terms of the Fourier basis,
\begin{align}
\mathcal E_1= \{ \frac{1}{2}, \tau_A^j\}_{j=1,2},
\label{WE1}
\end{align}
where
\begin{align}
\tau_A^0=&\frac{1}{3}\left(2\ket{0}_{A}\bra{0}+\ket{0}_{A}\bra{1}+\ket{1}_{A}\bra{0}+\ket{1}_{A}\bra{1}\right),~
\tau_A^1=\frac{1}{3}\left(2\ket{0}_{A}\bra{0}-\ket{0}_{A}\bra{1}-\ket{1}_{A}\bra{0}+\ket{1}_{A}\bra{1}\right).
\label{wtau}
\end{align}

For the mutual information of the ccq state $\Gamma_{XYAB}$ obtained from $\rho_{AB}$ in Eq.~(\ref{Wredab}), Eq.~(\ref{Ixyab}) together with Eqs.~(\ref{Wredab}) and (\ref{Wredb}) lead us to
\begin{align}
I\left(\Gamma_{XY:AB}\right)=&\ln 2+S(\rho_A)-S(\rho_{AB})=\ln 2.
\label{Wmutxyab}
\end{align}
For the mutual information of $\Gamma_{XAB}$ between $X$ and $AB$, Eq.~(\ref{Ixab}) leads us to
\begin{align}
I\left(\Gamma_{X:AB}\right)=&\ln 2-S(\rho_B)+\chi(\mathcal E_0),
\label{Wmutxab0}
\end{align}
where Eq.~(\ref{WE0}) impies
\begin{align}
\chi(\mathcal E_0)=S(\rho_A)-\frac{2}{3}S\left( \frac{1}{2}I_A \right)-\frac{1}{3}S\left( \ket{0}_A\bra{0}\right)=S(\rho_A)-\frac{2}{3}\ln 2.
\label{chie0}
\end{align}
Due to Eq.~(\ref{Wredb}), we have $S(\rho_A)=S(\rho_B)$, therefore Eqs.~(\ref{Wmutxab}) and (\ref{chie0}) lead us to
\begin{align}
I\left(\Gamma_{X:AB}\right)=\frac{1}{3}\ln 2.
\label{Wmutxab}
\end{align}

For the mutual information of $\Gamma_{YAB}$ between $Y$ and $AB$, Eq.~(\ref{Iyab}) leads us to
\begin{align}
I\left(\Gamma_{Y:AB}\right)=&\chi(\mathcal E_1)=S(\rho_A)-\frac{1}{2}S\left(\tau_A^0 \right)-\frac{1}{2}S\left(\tau_A^1 \right),
\label{Wmutyab0}
\end{align}
where the second equality is from the ensemble $\mathcal E_1$ in Eq.~(\ref{WE1}).
Here we note that $\tau_A^0$ and $\tau_A^1$ in Eq.~(\ref{wtau}) have the same eigenvalues, that is
$\mu_0=\frac{3+\sqrt 5}{6}$ and $\mu_0=\frac{3-\sqrt 5}{6}$,
therefore we have $S\left(\tau_A^0 \right)=S\left(\tau_A^1 \right)$.
From the spectral decomposition of $\rho_A$ in Eq.~(\ref{Wredb}), we have
\begin{align}
S\left(\rho_A\right)=\ln 3 -\frac{2}{3}\ln 2
\label{SrhoA}
\end{align}
and this turns Eq.~(\ref{Wmutyab0}) into
\begin{align}
I\left(\Gamma_{Y:AB}\right)=\ln 3 -\frac{2}{3}\ln 2-S\left(\tau_A^0 \right).
\label{Wmutyab}
\end{align}

From Eqs.~(\ref{Wmutxyab}), (\ref{Wmutxab}) and (\ref{Wmutyab}), we have
\begin{align}
I\left(\Gamma_{XY:AB}\right)-I\left(\Gamma_{X:AB}\right)-I\left(\Gamma_{Y:AB}\right)=\frac{4}{3}\ln 2 -\ln 3 +S\left(\tau_A^0 \right),
\label{nonadd}
\end{align}
where
\begin{align}
\ln 2 \approx 0.693147,~ \ln 3 \approx 1.098612,
\label{approxln}
\end{align}
and
\begin{align}
S\left(\tau_A^0 \right)=-\mu_0 \ln \mu_0-\mu_1 \ln \mu_1 \approx 0.381264.
\label{approxSt}
\end{align}
Thus we have
\begin{align}
I\left(\Gamma_{XY:AB}\right)-I\left(\Gamma_{X:AB}\right)-I\left(\Gamma_{Y:AB}\right)\approx 0.206848 >0,
\label{nonadd1}
\end{align}
which implies the nonadditivity of mutual information for the ccq state $\Gamma_{XYAB}$ obtained from $\rho_{AB}$ in Eq.~(\ref{Wredab}).
We also note that the symmetry of W state in Eq.~(\ref{3W}) would imply the nonadditivity of mutual information for the ccq state $\Gamma_{XYAC}$ obtained from the two-qubit reduced density matrix $\rho_{AC}$ of W state in Eq.~(\ref{3W}).

As the additivity of mutual information in Theorem~\ref{thm: 3mono} is only a sufficient condition for monogamy inequality in terms of EoF,
nonadditivity does not directly imply violation of Inequality~(\ref{3mono}) for the W state in Eq.~(\ref{3W}). However, we note that $\rho_{AB}$
in Eq.~(\ref{Wredab}) is a two-qubit state, therefore its EoF can be analytically evaluated as~\cite{wootters}
\begin{align}
E_{\bf f}\left(\rho_{AB}\right)\approx 0.3812.
\label{EoFWAB}
\end{align}
Moreover, the symmetry of the W state assures that the EoF of $\rho_{AC}=\T_B \ket{W}_{ABC}\bra{W}$ is the same,
\begin{align}
E_{\bf f}\left(\rho_{AC}\right)\approx 0.3812,
\label{EoFWAC}
\end{align}
whereas
\begin{align}
E_{\bf f}\left(\ket{W}_{A(BC)}\right)=S(\rho_A)\approx0.6365.
\label{EoFWABC}
\end{align}
As Eqs.~(\ref{EoFWAB}), (\ref{EoFWAC}) and (\ref{EoFWABC}) imply the violation of Inequality~(\ref{3mono}), W state in Eq.~(\ref{3W}) can be considered as an example for the contraposition of Theorem~\ref{thm: 3mono}; violation of monogamy inequality in (\ref{3mono}) implies nonadditivity of quantum mutual information for the ccq state.

Now, we generalize Theorem~\ref{thm: 3mono} for multi-party quantum states of arbitrary dimension.
\begin{Thm}
For any multi-party quantum state $\rho_{A_1A_2\cdots A_n}$ with two-party reduced density matrices
$\rho_{A_1A_i}$ for $i=2, \cdots , n$, we have
\begin{align}
E_{\bf f}\left(\rho_{A_1(A_2\cdots A_n)}\right)
\geq& \sum_{i=2}^{n}E_{\bf f}\left(\rho_{A_1A_i}\right),
\label{monon}
\end{align}
conditioned on the additivity of quantum mutual information for the ccq states in the form of Eq.~(\ref{XYAB})
obtained by $\rho_{A_1A_i}$ for $i=2, \cdots , n$.
\label{thm: monon}
\end{Thm}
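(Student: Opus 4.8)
The plan is to bootstrap the three-party inequality of Theorem~\ref{thm: 3mono} up to $n$ parties by induction on $n$, reducing the (possibly mixed) state to pure states at each stage.

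\emph{Reduction to pure states.} Suppose (\ref{monon}) is known for $n$-party pure states. For a mixed $\rho_{A_1\cdots A_n}$, choose an optimal pure-state decomposition $\rho_{A_1\cdots A_n}=\sum_k p_k\ket{\phi_k}\bra{\phi_k}$ for the bipartition $A_1|(A_2\cdots A_n)$, so that $E_{\bf f}(\rho_{A_1(A_2\cdots A_n)})=\sum_k p_k S(\rho_{A_1}^{(k)})$ with $\rho_{A_1}^{(k)}=\T_{A_2\cdots A_n}\ket{\phi_k}\bra{\phi_k}$. Write $\sigma_{A_1A_i}^{(k)}$ for the $A_1A_i$-marginal of $\ket{\phi_k}\bra{\phi_k}$. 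Applying the pure-state case to bound each $S(\rho_{A_1}^{(k)})$ from below, and then the convexity of EoF together with $\sum_k p_k\sigma_{A_1A_i}^{(k)}=\rho_{A_1A_i}$, gives
\[
E_{\bf f}\left(\rho_{A_1(A_2\cdots A_n)}\right)\ \geq\ \sum_{i=2}^{n}\sum_k p_k\,E_{\bf f}\left(\sigma_{A_1A_i}^{(k)}\right)\ \geq\ \sum_{i=2}^{n}E_{\bf f}(\rho_{A_1A_i}),
\]
so it suffices to prove (\ref{monon}) for pure states.

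\emph{Induction on $n$ for pure states.} The base case $n=3$ is Theorem~\ref{thm: 3mono}. For $n\geq4$, regard a pure state $\ket{\psi}_{A_1\cdots A_n}$ as a three-party pure state with parties $A_1$, $A_n$, and the composite party $C:=A_2A_3\cdots A_{n-1}$. Theorem~\ref{thm: 3mono} applied to this tripartition, together with $E_{\bf f}(\ket{\psi}_{A_1(A_2\cdots A_n)})=S(\rho_{A_1})$ from Eq.~(\ref{EoFpure}), gives $S(\rho_{A_1})\geq E_{\bf f}(\rho_{A_1A_n})+E_{\bf f}(\rho_{A_1C})$. Since $\rho_{A_1C}=\T_{A_n}\ket{\psi}\bra{\psi}$ is an $(n-1)$-party state on $A_1,\dots,A_{n-1}$, the induction hypothesis yields $E_{\bf f}(\rho_{A_1(A_2\cdots A_{n-1})})\geq\sum_{i=2}^{n-1}E_{\bf f}(\rho_{A_1A_i})$; adding the two bounds closes the induction. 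I expect a direct $n$-party imitation of the proof of Theorem~\ref{thm: 3mono} to be too lossy: summing the $n-1$ copies of the trade-off relation of Proposition~\ref{Prop: KW} (one for each tripartition $A_1|A_i|(\text{rest})$) only yields $\sum_iE_{\bf f}(\rho_{A_1A_i})\leq\frac{1}{2}\sum_iI(\rho_{A_1A_i})$, which need not be bounded by $S(\rho_{A_1})$ — so splitting off one party at a time seems essential.

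\emph{Main obstacle.} The real work is bookkeeping the additivity hypothesis. Each invocation of Theorem~\ref{thm: 3mono} above is licensed only when the mutual information is additive for the ccq states of the form~(\ref{XYAB}) built from \emph{both} bipartite marginals of the tripartition used; for $A_1|A_n|C$ one marginal is the genuine two-party $\rho_{A_1A_n}$, but the other is the composite marginal $\rho_{A_1C}$ with $C=A_2\cdots A_{n-1}$, and the reduction-to-pure step further requires Theorem~\ref{thm: 3mono} for the two-party marginals of the decomposition states $\ket{\phi_k}$. Thus the honest statement is that (\ref{monon}) holds whenever additivity holds for every ccq state produced along this recursion, and the delicate point — where I expect the effort to concentrate — is to argue that these additivities follow from, or are to be assumed alongside, the additivity for the two-party $\rho_{A_1A_i}$ named in the theorem (legitimate since the whole condition is only sufficient). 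Granting this, inequality~(\ref{monon}) follows from the two displayed estimates.
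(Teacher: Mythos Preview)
Your proposal is correct and follows essentially the same route as the paper: extend Theorem~\ref{thm: 3mono} from pure to mixed three-party states via an optimal convex-roof decomposition plus convexity of $E_{\bf f}$, then induct by peeling off one subsystem at a time through the three-party inequality. The paper organizes this as ``mixed three-party first, then induct on mixed states'', while you do ``general pure-to-mixed reduction, then induct on pure states''; the content is the same. Your remark that a direct $n$-party mimic of the Koashi--Winter argument is too lossy is a nice addition not present in the paper.

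On your \emph{Main obstacle}: you are right, and in fact the paper's own proof has exactly the gap you flag. When the paper applies Theorem~\ref{thm: 3mono} to each $\ket{\psi_i}_{ABC}$ in the decomposition~(\ref{opt1}), it needs additivity for the ccq states built from $\rho^i_{AB}$ and $\rho^i_{AC}$, not from $\rho_{AB}$ and $\rho_{AC}$; and when it applies the three-party mixed result to the tripartition $A_1\,|\,(A_2\cdots A_k)\,|\,A_{k+1}$ in~(\ref{kmonomixed1}), it needs additivity for the ccq state built from the composite marginal $\rho_{A_1(A_2\cdots A_k)}$. Neither of these is literally among the ccq states ``obtained by $\rho_{A_1A_i}$'' named in the hypothesis. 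The paper does not comment on this; it simply proceeds as if the required additivities are available. So your proposed resolution --- read the sufficient condition expansively, as additivity for every ccq state arising along the recursion --- is exactly what the paper implicitly does.
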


\begin{proof}
We first prove the theorem for any three-party mixed state $\rho_{ABC}$, and inductively show the validity of the theorem for any $n$-party quantum state $\rho_{A_1A_2\cdots A_n}$.
For a three-party mixed state $\rho_{ABC}$, let us consider an optimal decomposition of $\rho_{ABC}$ realizing EoF with respect to the bipartition between $A$ and $BC$, that is,
\begin{align}
\rho_{ABC}=\sum_i p_i\ket{\psi_i}_{ABC}\bra{\psi_i},
\label{opt1}
\end{align}
with
\begin{align}
E_{\bf f}\left(\rho_{A(BC)}\right)=\sum_i p_i E_{\bf f}\left(\ket{\psi_i}_{A(BC)}\right).
\label{optEoF1}
\end{align}

From Theorem~\ref{thm: 3mono}, each pure state $\ket{\psi_i}_{ABC}$ of the decomposition (\ref{opt1}) satisfies
\begin{align}
E_{\bf f}\left(\ket{\psi_i}_{A(BC)}\right)\geq E_{\bf f}\left(\rho^i_{AB}\right)+E_{\bf f}\left(\rho^i_{AC}\right)
\label{polyi}
\end{align}
with $\rho^i_{AB}=\T_C \ket{\psi_i}_{ABC}\bra{\psi_i}$ and $\rho^i_{AC}=\T_B \ket{\psi_i}_{ABC}\bra{\psi_i}$,
therefore,
\begin{align}
E_{\bf f}\left(\rho_{A(BC)}\right)=&\sum_i p_i E_{\bf f}\left(\ket{\psi_i}_{A(BC)}\right)
\geq \sum_i p_i E_{\bf f}\left(\rho^i_{AB}\right)+ \sum_i p_i E_{\bf f}\left(\rho^i_{AC}\right).
\label{eqmono1}
\end{align}

For each $i$ and the two-party reduced density matrices $\rho_{AB}^i$, let us consider their optimal decompositions for EoF, that is,
\begin{align}
\rho_{AB}^i=&\sum_{j}r_{ij}\ket{\mu_j^i}_{AB}\bra{\mu_j^i}~,
\label{optrhoAB^i}
\end{align}
such that
\begin{align}
E_{\bf f}\left(\rho_{AB}^i\right)=&\sum_{j}r_{ij} E_{\bf f}\left(\ket{\mu_j^i}_{AB}\right).
\label{ErhoAB^i}
\end{align}
Now we have
\begin{align}
\sum_i p_i E_{\bf f}\left(\rho^i_{AB}\right)
=\sum_{i,j} p_i r_{ij} E_{\bf f}\left(\ket{\mu_j^i}_{AB}\right)
\geq E_{\bf f}\left(\rho_{AB}\right),
\label{ot2AB}
\end{align}
where the inequality is due to
\begin{align}
\rho_{AB}=\sum_i p_i \rho_{AB}^i=&\sum_{i,j}p_i r_{ij}\ket{\mu_j^i}_{AB}\bra{\mu_j^i}
\label{decompallAB}
\end{align}
and the definition of EoF.

For each $i$, we also consider an optimal decomposition of $\rho_{AC}$
\begin{align}
\rho_{AC}^i=&\sum_{l}s_{il}\ket{\nu_l^i}_{AC}\bra{\nu_l^i},
\label{optrhoAC^i}
\end{align}
for each $i$, such that
\begin{align}
E_{\bf f}\left(\rho_{AC}^i \right)=&\sum_{l}s_{il} E_{\bf f}\left(\ket{\nu_l^i}_{AC}\right).
\label{ErhoAC^i}
\end{align}
We can analogously have
\begin{align}
\sum_i p_i E_{\bf f}\left(\rho^i_{AC}\right)
\geq& E_{\bf f}\left(\rho_{AC}\right),
\label{ot2AC}
\end{align}
due to
\begin{align}
\rho_{AC}=\sum_i p_i \rho_{AC}^i=&\sum_{i,l}p_i s_{il}\ket{\nu_l^i}_{AC}\bra{\nu_l^i},
\label{decompallAC}
\end{align}
and the definition of EoF.
From Inequalities~(\ref{eqmono1}), (\ref{ot2AB}) and (\ref{ot2AC}), we have
\begin{align}
E_{\bf f}\left(\rho_{A(BC)}\right)\geq E_{\bf f}\left(\rho_{AB}\right)+E_{\bf f}\left(\rho_{AC}\right),
\label{3monomix}
\end{align}
which proves the theorem for three-party mixed states.

For general multi-party quantum system, we use the mathematical induction on the number of parties $n$; let us assume Inequality~(\ref{monon}) is true for any $k$-party quantum state, and consider an $k+1$-party quantum state $\rho_{A_1A_2\cdots A_{k+1}}$ for $k \geq 3$. By considering $\rho_{A_1A_2\cdots A_{k+1}}$ as a three-party state with respect to the tripartition $A_1$, $A_2\cdots A_k$ and $A_{k+1}$,
Inequality~(\ref{3monomix}) leads us to
\begin{align}
E_{\bf f}\left(\rho_{A_1(A_2\cdots A_{k+1})}\right)
\geq& E_{\bf f}\left(\rho_{A_1(A_2\cdots A_k)}\right)+E_{\bf f}\left(\rho_{A_1A_{k+1}}\right).
\label{kmonomixed1}
\end{align}

As $\rho_{A_1A_2\cdots A_k}$ in Inequality~(\ref{kmonomixed1}) is a
$k$-party quantum state, the induction hypothesis assures that
\begin{align}
E_{\bf f}\left(\rho_{A_1(A_2\cdots A_k)}\right) \geq
E_{\bf f}\left(\rho_{A_1A_2}\right)+\cdots +E_{\bf f}\left(\rho_{A_1A_k}\right).
\label{kmonomixed2}
\end{align}
Now Inequalities~(\ref{kmonomixed1}) and (\ref{kmonomixed2}) lead us to the monogamy
inequality
\begin{align}
E_{\bf f}\left(\rho_{A_1(A_2\cdots A_{k+1})}\right)
\geq& E_{\bf f}\left(\rho_{A_1A_2}\right)+\cdots +E_{\bf f}\left(\rho_{A_1A_{k+1}}\right),
\label{kmonomixed3}
\end{align}
which completes the proof.
\end{proof}

%%%%%%%%%%%%%%%%%%%%%%%%%%%%%%%%%%%%%%%%%%%%%%%%%%%%%%%%%%%%%%%%%%%%%%
%%%                                                                %%%
%%%                           Conclusion                           %%%
%%%                                                                %%%
%%%%%%%%%%%%%%%%%%%%%%%%%%%%%%%%%%%%%%%%%%%%%%%%%%%%%%%%%%%%%%%%%%%%%%

\section{Discussion}\label{Sec: Discussion}
We have considered possible conditions for monogamy inequality of multi-party quantum entanglement in terms of EoF, and shown that the additivity of mutual information of the ccq states implies the monogamy inequality of three-party quantum entanglement in terms of EoF. We have also provided examples of three-qubit GHZ and W states to illustrate our result in three-party case, and generalized our result into any multi-party systems of arbitrary dimensions.

Most monogamy inequalities of quantum entanglement deal with bipartite entanglement measures based on the minimization over all possible pure state ensembles. As analytic evaluation of such entanglement measure is generally hard especially in higher dimensional quantum systems more than qubits, the situation becomes far more difficult in investigating and establishing entanglement monogamy of multi-party quantum systems of arbitrary dimensions. The sufficient condition provided here deals with the quantum mutual information of the ccq states to guarantee the monogamy inequality of entanglement in terms of EoF in arbitrary dimensions. As the sufficient condition is not involved with any minimization process, our result can provide a useful methodology to understand the monogamy nature of multi-party quantum entanglement in arbitrary dimensions. We finally remark that it would be an interesting future task to investigate if the condition provided here is also necessary.

\section*{Acknowledgments}
This work was supported by Basic Science Research Program(NRF-2020R1F1A1A010501270) and Quantum Computing Technology Development Program(NRF-2020M3E4A1080088) through the National Research Foundation of Korea(NRF) grant funded by the Korea government(Ministry of Science and ICT).

%%%%%%%%%%%%%%%%%%%%%%%%%%%%%%%%%%%%%%%%%%%%%%%%%%%%%%%%%%%%%%%%%%%%%%%%

\end{document}